\documentclass[conference]{IEEEtran}
\newcommand{\figurewidth}{0.8\linewidth}

\usepackage{cite}
\usepackage{graphicx}
\usepackage{epsfig,psfrag}
\usepackage{balance}
\usepackage{array}
\usepackage{theorem}
\usepackage{amsmath}
\usepackage{amssymb}
\usepackage{flushend}
\usepackage{tikz}

\newtheorem{theorem}{Theorem}

\newtheorem{lemma}{Lemma}
\newtheorem{proposition}[theorem]{Proposition}

\newtheorem{definition}{Definition}

\newcommand{\mc}{\mathcal}
\newcommand{\mb}{\mathbf}

\newcommand{\NM}{\text{UComp}}
\newcommand{\CM}{\text{UCompM}}
\newcommand{\DCM}{\text{DUCompM}}

\newcommand{\WDCM}{\text{DUCompM}}
\newcommand{\MM}{m}
\newcommand{\SW}{Slepian-Wolf}

\newcommand{\PE}{p_e}

\hyphenation{op-tical net-works semi-conduc-tor des-crip-tion}

\begin{document}
\IEEEoverridecommandlockouts
%
\title{On Lossless Universal Compression of\\Distributed Identical Sources\vspace{-0.06in}}
\author{Ahmad Beirami and Faramarz Fekri\\School of Electrical and Computer Engineering\\Georgia Institute of Technology,~Atlanta~GA~30332, USA\\Email:~\{beirami,~fekri\}@ece.gatech.edu

\thanks{This material is based upon work supported by the National Science Foundation under Grant No. CNS-1017234.}}
%
%
%

\maketitle
\thispagestyle{empty}
\pagestyle{empty}

\begin{abstract}
Slepian-Wolf theorem is a well-known framework that targets almost lossless compression of (two) data streams with symbol-by-symbol correlation between the outputs of (two) distributed sources. However, this paper considers a different scenario which does not fit in the Slepian-Wolf framework. We consider two identical but spatially separated sources. We wish to study the universal compression of a sequence of length $n$ from one of the sources provided that the decoder has access to (i.e., memorized) a sequence of length $m$ from the other source. Such a scenario occurs, for example, in the universal compression of data from multiple mirrors of the same server. In this setup,  the correlation does not arise from symbol-by-symbol dependency of two outputs from the two sources. Instead, the sequences are correlated through the information that they contain about the unknown source parameter. We show that the finite-length nature of the compression problem at hand requires considering a notion of almost lossless source coding, where coding incurs an error probability $\PE(n)$ that vanishes with sequence length $n$. We obtain a lower bound on the average minimax redundancy of almost lossless codes  as a function of the sequence length $n$ and the permissible error probability $\PE$ when the decoder has a memory of length $\MM$ and the encoders do not communicate. Our results demonstrate that a strict performance loss is incurred when the two encoders do not communicate even when
the decoder knows the unknown parameter vector (i.e., $\MM \to \infty$).
\end{abstract}


\section{Introduction}
\label{sec:intro}
Many practical applications involve compression of data that are taken from multiple spatially separated sources. A key challenge in most of such applications is that the sources usually cannot communicate with each other. Theoretical results by Slepian and Wolf demonstrate that if the data streams from two sources have symbol-by-symbol correlation, the sequences can be compressed to their joint entropy even when the two encoders do not communicate~\cite{slepian_wolf}. In other words, as in Fig.~\ref{fig:distributed_basic}, assume that sources $S_1$ and $S_2$ wish to transmit the sequences $y^n$ and $x^n$, respectively, to a node $R$. As the length $n$ of the sequences increases, the decoding of $x^n$ at $R$ with the help of $y^n$ can be performed using a code with the average length that asymptotically approaches the conditional entropy, (i.e., $H(X^n| Y^n)$) with asymptotically zero error probability.
If the decoder did not choose to use $y^n$ in decoding, the encoder at $S_2$ would  have to encode the sequence $x^n$ irrespective to $y^n$ with an average length that is lower bounded by $H(X^n)$.
Note that the conditional entropy $H(X^n| Y^n)$ may be significantly smaller than the individual entropy $H(X^n)$. After recent development of practical {\SW} (SW) coding schemes by Pradhan and Ramchandran~\cite{PRADHAN-DSC}, SW coding has drawn a great deal of attention as a promising technique for sensor networks~\cite{Mina_TCOM} and distributed video coding~\cite{Distributed_Video_Coding}.

The {\SW} theorem naturally suits applications where the (new) sequence $x^n$ from $S_2$ (in Fig.~\ref{fig:distributed_basic}) can be viewed as a noisy version of the (previously seen) sequence $y^\MM$ that could possibly be exploited as side information to reduce the code length of $x^n$. Data gathering from sensors that measure the same phenomenon is one example.
However, in many scenarios,
the compression of distributed sources cannot be modeled by the SW framework.
As an example, consider the universal compression of data from the mirrors of the same server, where the sources are exact copies of each other. Hence, it is plausible to assume that the sources ($S_1$ and $S_2$ in Fig.~\ref{fig:distributed_basic}) follow the same statistical model. On the other hand, the source model might be unknown requiring universal compression~\cite{Universal_Finite_memory_source, Rissanen_1984,ISIT11}.
The question is, assuming two identical sources $S_1$ and $S_2$ and having $y^m$ from $S_1$ at the decoder, what is the achievable universal compression performance on $x^n$ at $S_2$ provided that the encoders at $S_1$ and $S_2$ do not communicate.

\begin{figure}
\centering
\vspace{0.05in}
\includegraphics[height=1.2in, angle=-90]{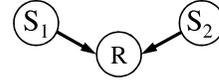}
\caption{The basic scenario for the compression of distributed sources.}
\vspace{0.05in}
\label{fig:distributed_basic}
\vspace{-0.3in}
\end{figure}

We stress that the nature of this problems is fundamentally different from those addressed by the {\SW} (SW) theorem in~\cite{slepian_wolf}.
 Here,  instead of symbol-by-symbol correlation between the sequences as in SW setup,
 the redundancy is due to the fact that when the source parameter is a priori unknown there is significant overhead in the universal compression of finite-length sequences~\cite{ISIT11, Merhav_Feder_IT, DCC12}.
Considering the example in Fig.~\ref{fig:distributed_basic} with two identical sources $S_1$ and $S_2$, $y^m$ and $x^n$ would be independent given that the source model is known. However, when the source parameter is unknown, $y^m$ and $x^n$ are {\em correlated} with each other through the information they contain about the unknown source parameter.
 The question is whether or not this correlation can be potentially leveraged by the encoder of $S_2$ and the decoder at  $R$ in the decoding of $x^n$ using $y^\MM$ in order to reduce the code length of $x^n$.

In this paper, we study the universal compression of distributed identical sources. By identical we mean that the sources ($S_1$ and $S_2$) share the same unknown source parameter. By distributed we mean that the sources are spatially separated and the encoders do not communicate with each other.
This problem can also be viewed as universal compression with training data that is only available to the decoder. It is known that forming a statistical model from a training data set would improve the performance of universal compression~\cite{Korodi2005,INFOCOM12}. In~\cite{DCC12, ISIT12_gain}, we theoretically derived the \emph{gain} that is obtained in the universal compression of the new sequence $x^n$ from $S_2$ by memorizing (i.e., having access to) $y^m$ from $S_1$ at \emph{both} the decoder (at $R$) and the encoder (at $S_2$). This corresponds to the reduced case of our problem where the sources $S_1$ and $S_2$ are either co-located (a single source) or allowed to communicate. For the reduced problem case, in~\cite{ITW11, INFOCOM12}, we further extended the setup to a network with a single source and derived bounds on the \emph{network-wide gain} where a small fraction of the intermediate nodes in the network are capable of memorization.
However, as we demonstrate in the present paper, the extension to the multiple spatially separated sources, where the training data is only available to the decoder, is non-trivial and raises a new set of challenges  that we aim to address.
The rest of this paper is organized as follows. In Sec.~\ref{sec:background}, we briefly review the necessary background. In Sec.~\ref{sec:setup}, we describe the problem setup. In Sec.~\ref{sec:results}, we present our main results. In Sec.~\ref{sec:discussion}, we provide discussion on the results. In Sec.~\ref{sec:analysis}, we present the technical analysis of the results. Finally Sec.~\ref{sec:conclusion} concludes the paper.

\section{Background Review}
\label{sec:background}
In this section, we review the necessary background, notations, and definitions followed by the formal problem setup.
Following the notation in~\cite{ISIT12_gain}, let $\mc{A}$ be a finite alphabet. Let $d$ be the number of the source parameters. Further, let $\theta = (\theta_1,...,\theta_d)$ denote the $d$-dimensional parameter vector associated with the parametric source (that is a priori unknown). Let $\Theta^d$ denote the space of $d$-dimensional parameter vectors. We denote $\mu_\theta$ as the probability measure that is defined by the parameter vector $\theta$.
Let $\mc{P}^d$ denote the \emph{family} of sources that are described with the $d$-dimensional unknown parameter vector $\theta \in \Theta^d$. We use the notation $x^n = (x_1,...,x_n) \in \mc{A}^n$ to present a sequence of length $n$ from the alphabet $\mc{A}$. We further denote $X^n$ as a random sequence of length $n$ (that follows the probability distribution $\mu_\theta$).
 Let $H_n(\theta)$ be the source entropy given $\theta$, i.e., $H_n(\theta) =  \mb{E} \log \left(\frac{1}{\mu_\theta(X^n)} \right)$.\footnote{Throughout this paper, all expectations are taken with respect to the probability measure $\mu_\theta$, and $\log(\cdot)$  denotes the logarithm in base $2$.}

Let $c_n:\mc{A}^n \to \{0,1\}^*$ be an injective mapping from the set $\mc{A}^n$ of the sequences of length $n$ over $\mc{A}$ to the set $\{0,1\}^*$ of binary sequences.
Next, we present the notions of strictly lossless and almost lossless source codes, which will be needed for the study of UC-DIS.
\begin{definition}
The code $c_n(\cdot): \mc{A}^n \to \{0,1\}^*$ is called strictly lossless (also called zero-error) if there exists a reverse mapping $d_n(\cdot):\{0,1\}^* \to \mc{A}^n$ such that
$$\forall x^n \in \mc{A}^n:~~d_n(c_n(x^n)) = x^n.$$
\end{definition}
All of the practical data compression schemes are examples of strictly lossless codes, namely, the arithmetic coding, Huffman, Lempel-Ziv, and Context-Tree-Weighting algorithms.

On the other hand, due to the distributed nature of the sources, we are concerned with the slightly weaker notion of almost lossless source coding in this paper.
\begin{definition}
The code $\hat{c}_n^{\PE}(\cdot): \mc{A}^n \to \{0,1\}^*$ is called almost lossless with permissible error probability $\PE(n)=o(1)$, if there exists a reverse mapping $\hat{d}_n^{\PE}(\cdot):\{0,1\}^* \to \mc{A}^n$ such that
$$\mb{E}\{ \mb{1}_e(X^n) \}\leq \PE(n),$$
where $\mb{1}_e(x^n)$ denotes the error indicator function, i.e,
\begin{equation}
\mb{1}_e(x^n) =
\left\{
\begin{array}{ll}
1 & \hspace{0.1in} \hat{d}_n^{\PE}(\hat{c}_n^{\PE}(x^n)) \neq x^n,\\
0 & \hspace{0.1in} \text{otherwise}.
\end{array}
\right.\nonumber
\end{equation}
\end{definition}
 The almost lossless codes allow a non-zero error probability $\PE(n)$ for any $n$ while they are \emph{almost surely} asymptotically error free. Note that strictly lossless codes correspond to $\PE(n) = 0$. 
The proofs of Shannon~\cite{Shannon_paper} for the existence of entropy achieving source codes are based on almost lossless random codes. Further, the proof of the SW theorem~\cite{slepian_wolf} also uses almost lossless codes. Further, all of the practical implementations of SW source coding are based on almost lossless codes (cf.~\cite{PRADHAN-DSC,Mina_TCOM}).
We stress that the nature of the almost lossless source coding is different from that incurred by the lossy source coding (i.e., the rate-distortion theory). In the rate-distortion theory, a code is designed to  asymptotically achieve a given distortion level  as the length of the sequence grows to infinity. Therefore, since the almost lossless coding asymptotically achieves a zero-distortion, in fact, it coincides with the special case of zero-distortion in the rate-distortion curve.

\vspace{-.1in}
\section{Problem Setup}
\label{sec:setup}
We present the problem setup in the most basic scenario,  shown in Fig.~\ref{fig:distributed_basic},  consisting of two identical sources located in nodes $S_1$ and $S_2$, and the destination node $R$. We let the information sources at $S_1$ and $S_2$ be parametric with an identical $d$-dimensional parameter vector that is unknown a priori to the encoder and the decoder. Let $y^\MM$ and $x^n$ denote two sequences with lengths $\MM$ and $n$, respectively, that are generated by the unknown information source model.
In the sequel, we describe the communication scenario for universal compression of distributed identical sources. We assume that $S_1$ has transmitted the sequence $y^\MM$ to $R$. Next, at some later time, $S_2$ wishes to send $x^n$ to $R$. We further assume that $R$ is a memory unit and is capable of memorizing the sequence $y^\MM$.
We investigate the achievable saving in the compression of $x^n$  in the $S_2$-$R$ link when $R$ has memorized the sequence $y^\MM$. Note that $S_2$ does not have access to the sequence $y^\MM$.
 If the node $R$ did not have a memory unit, $S_2$ would have to apply an end-to-end universal compression to $x^n$. However, the side information provided by $y^\MM$ at $R$ about the source parameter can potentially result in a reduction in the amount of bits required to be transmitted in the $S_2$-$R$ link.
  Throughout the paper, we refer to this problem setup as Universal Compression of Distributed Identical Sources (UC-DIS).

In the study of coding strategies for UC-DIS, we compare the following cases for the compression of $x^n$ at $S_2$. Note that we assume that $y^m$ is already universally compressed at $S_1$ and transmitted and decoded at $R$.
\begin{itemize}
\item {{\NM}} (Universal compression), which only applies end-to-end  lossless universal compression to $x^n$  at $S_2$ without regard to $y^m$.
\item { {\CM}} (Universal compression with memorization at both the encoder and the decoder), which assumes that the encoder (at $S_2$) and the decoder (at $R$) have access to a common memory (i.e., sequence $y^\MM$), which is utilized in the  lossless compression of $x^n$ at $S_2$.
\item { {\WDCM}} (Distributed universal compression with memorization at the decoder), which assumes that decoder (at $R$) has memorized (i.e., has access to) $y^\MM$ while the encoder (at $S_2$) only knows the length $m$ of the side information but does not know the exact sequence $y^m$. The encoder then applies an almost lossless code to $x^n$ that is decoded at $R$ with permissible error probability $\PE$ using $y^m$.
\end{itemize}
Note that {\NM} does not benefit from the memorization and is the conventional scheme.
Further, {\CM} is introduced as the benchmark for the purpose of evaluating the performance of {\WDCM} and is not practically useful since it requires the sequence $y^m$ from $S_1$ to be available at the encoder of $S_2$.

Let $l_n(x^n)$ denote the strictly lossless length of  the codeword associated with the sequence $x^n$. Further, let $L_n^{}$  denote the space of  strictly lossless universal length functions on a sequence of length $n$.
Denote $R_n(l_n,\theta)$  as the expected redundancy of such strictly lossless codes on a sequence of length $n$ for the parameter vector $\theta$, i.e., $R_n(l_n,\theta) = \mb{E}l_n(X^n) -  H_n(\theta)$. Further, denote $\bar{R}_\NM(n)$ as the average minimax redundancy as given by
\begin{equation}
\bar{R}_\NM(n) \triangleq \min_{l_n \in L_n} \sup_{\theta \in \Theta^d} R_n(l_n,\theta).
\vspace{-.06in}
\end{equation}

In {\CM}, let $l_{n|\MM}$  be the strictly lossless universal length function with a memory sequence of length $m$.
Denote $L_{n|\MM}$  as the space of such strictly lossless universal length functions.
Let $R_n(l_{n|\MM},\theta)$  be the expected redundancy of encoding a sequence of length $n$ form the source $\mu_\theta$ using the length function $l_{n|\MM}$. Further, let $\bar{R}_\CM(n,m)$ denote the corresponding average minimax redundancy, i.e.,
\begin{equation}
\bar{R}_{\CM}(n,m) \triangleq \min_{l_{n|\MM} \in L_{n|\MM}} \sup_{\theta\in \Theta^d} R_n(l_{n|\MM},\theta).
\vspace{-.06in}
\end{equation}

In {\WDCM}, let $\hat{l}^{\PE}_{n|\MM}$ denote the almost lossless universal length function with a memorized sequence of length $\MM$ that is only available to the decoder, where the permissible error probability on decoding $x^n$ is $\PE$.
Further, denote $\hat{L}^{\PE}_{n|\MM}$  as the space of such universal length functions.
Denote $R_n(\hat{l}^{\PE}_{n|\MM},\theta)$  as the expected redundancy of encoding a sequence $x^n$ of length $n$  using the length function $\hat{l}^{\PE}_{n|\MM}$.
Denote $\bar{R}^{\PE}_\DCM(n,m)$ as the expected minimax redundancy as given by
\begin{equation}
\bar{R}^{\PE}_{\DCM}(n,m) \triangleq \min_{\hat{l}^{\PE}_{n|\MM} \in \hat{L}^{\PE}_{n|\MM}} \sup_{\theta\in \Theta^d} R_n(\hat{l}^{\PE}_{n|\MM},\theta).
\vspace{-.06in}
\end{equation}
Note that we denote $\bar{R}_\DCM(n,m) \triangleq \bar{R}^0_\DCM(n,m)$ as the expected minimax redundancy of \emph{strictly lossless} {\WDCM} coding strategy.

\section{Performance Evaluation of UC-DIS:\\Results on the Average Minimax Redundancy}
\label{sec:results}
In this section, we provide results on the average minimax redundancy of the different coding strategies introduced in the previous section for the UC-DIS problem.
Discussion on the implications of the results and the proof sketches are deferred to Sec.~\ref{sec:discussion} and Sec.~\ref{sec:analysis}, respectively.

In the case of strictly lossless {\NM}, Clarke and Barron derived the expected minimax redundancy $\bar{R}_\NM(n)$ for memoryless sources~\cite{Clarke_Barron}, which was later generalized by Atteson for Markov sources, as the following~\cite{atteson_markov}:
\begin{theorem}
The average minimax redundancy of strictly lossless {\NM} coding strategy is given by
\begin{equation}
\bar{R}_\NM(n) = \frac{d}{2}  \log\left( \frac{n}{2\pi e} \right) + \log \int
|\mc{I}_n(\theta)|^{\frac{1}{2}}d\theta + O\left(\frac{1}{n}\right),\nonumber
\label{eq:minimax}
\end{equation}
where $\mc{I}_n(\theta)$ is the Fisher information matrix.
\label{thm:NM}
\end{theorem}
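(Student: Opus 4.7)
The plan is to translate the combinatorial minimax over length functions into an analytic minimax over coding distributions, invoke the redundancy--capacity duality, and then apply Laplace's method to a Jeffreys' mixture. First, Kraft's inequality puts any $l_n \in L_n$ in one-to-one correspondence (up to at most one bit of integer rounding) with a probability mass function $q_n(x^n) = 2^{-l_n(x^n)}$, so that $R_n(l_n,\theta)$ agrees with the Kullback--Leibler divergence $D(\mu_\theta \| q_n)$ up to the residual absorbed into $O(1/n)$. The problem thus reduces to
\[
\bar{R}_\NM(n) = \min_{q_n}\; \sup_{\theta \in \Theta^d}\; D(\mu_\theta \| q_n) + O\!\left(\tfrac{1}{n}\right),
\]
and by the Gallager--Davisson redundancy--capacity theorem this inner minimax equals the Shannon capacity $C_n = \sup_{w} I(\theta;X^n)$ of the parameter-to-sequence channel, achieved (asymptotically) by the mixture $q_n^\star(x^n) = \int \mu_\theta(x^n)\, w_J(\theta)\, d\theta$, where $w_J$ is the Jeffreys prior $w_J(\theta) \propto |\mc{I}_n(\theta)|^{1/2}$ normalised over $\Theta^d$ (in Atteson's Markov generalisation, $\mc{I}_n$ is the per-symbol Fisher information evaluated at the stationary distribution).

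Next, on the typical set where the MLE $\hat\theta(x^n)$ lies in the interior of $\Theta^d$ and the log-likelihood is strongly concave there, a second-order Taylor expansion of $\log \mu_\theta(x^n)$ around $\hat\theta$ together with Laplace's method gives
\[
-\log q_n^\star(x^n) = -\log \mu_{\hat\theta}(x^n) + \tfrac{d}{2}\log\!\tfrac{n}{2\pi} + \tfrac{1}{2}\log |\mc{I}_n(\hat\theta)| - \log w_J(\hat\theta) + O\!\left(\tfrac{1}{n}\right).
\]
Taking expectation under $\mu_\theta$ and using the standard chi-square asymptotics $\mb{E}[-\log \mu_{\hat\theta}(X^n)] = H_n(\theta) - \tfrac{d}{2}\log e + O(1/n)$ for the likelihood-ratio at the MLE, the ``random'' term $\tfrac{1}{2}\log|\mc{I}_n(\hat\theta)| - \log w_J(\hat\theta)$ collapses to the constant $\log \int |\mc{I}_n(\theta)|^{1/2} d\theta$ once $w_J$ is substituted and normalised. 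Combining these pieces yields the claimed expression $\tfrac{d}{2}\log(n/(2\pi e)) + \log\int |\mc{I}_n(\theta)|^{1/2} d\theta + O(1/n)$. The matching converse is immediate from the capacity interpretation: for any competing $q_n$ one has $\sup_\theta D(\mu_\theta \| q_n) \geq \int D(\mu_\theta \| q_n) w_J(\theta) d\theta \geq I(\theta;X^n)_{w_J} = C_n - o(1)$, where the last inequality uses that the mixture is the $w_J$-minimiser of the average divergence and that Jeffreys is asymptotically capacity-achieving.

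The main technical obstacle is making the Laplace expansion uniform in $x^n$ with genuine $O(1/n)$ precision. This requires (i) exponential concentration of the MLE in an $n^{-1/2}$-neighbourhood of the truth so that the atypical set contributes only $o(1/n)$ to the expected redundancy, (ii) smoothness of $|\mc{I}_n(\theta)|$ and of the third derivatives of $\log \mu_\theta(\cdot)$ on compact subsets of $\Theta^d$ to control the Taylor remainder, and (iii) for the Markov extension, replacing the i.i.d.\ law of large numbers for the empirical Fisher matrix by an ergodic theorem while absorbing initial-state transients into the residual. These regularity conditions are standard in the Clarke--Barron--Atteson line of work, and once they are granted the asymptotic formula follows by the mechanism sketched above.
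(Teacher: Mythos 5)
The paper itself offers no proof of Theorem~\ref{thm:NM}: it quotes the Clarke--Barron formula and Atteson's Markov extension as known results, and your sketch is exactly the standard argument from those references (Kraft/redundancy--capacity reduction, Jeffreys-mixture coding distribution, Laplace expansion around the MLE with the chi-square correction, and the least-favorable-prior converse), so it is essentially the same approach as the paper's cited source and is correct at the level of a sketch. One small caveat: the integer-rounding loss from Kraft is $O(1)$ (up to one bit) and cannot be absorbed into $O\left(\frac{1}{n}\right)$; the paper sidesteps this by ignoring the integer constraint on length functions (see its footnote), so you should phrase the reduction in terms of idealized real-valued lengths rather than claim the rounding vanishes at rate $1/n$.
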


In the case of strictly lossless {\CM} (i.e., when the two encoders can communicate), we obtain the average minimax redundancy in the following theorem.
\begin{theorem}
The average minimax redundancy of strictly lossless {\CM} coding strategy is given by
\begin{equation}
\bar{R}_{\CM}(n,m) = \frac{d}{2} \log\left( 1+ \frac{n}{\MM}\right)+ O\left(\frac{1}{\MM} + \frac{1}{n} \right).\nonumber
\end{equation}
\label{thm:CM}
\end{theorem}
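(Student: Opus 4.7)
The plan is to bound $\bar{R}_{\CM}(n,m)$ from above by constructing an explicit mixture code and then match it from below via the capacity--redundancy duality. The key identity that makes everything work is the conditional independence $X^n \perp Y^m \mid \theta$, which lets the redundancy of a conditional code on $x^n$ given $y^m$ split into the difference of a joint redundancy on $(y^m,x^n)$ and a marginal redundancy on $y^m$, each of which is covered by Theorem~\ref{thm:NM}.

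For achievability, I would take $w_J(\theta) \propto |\mc{I}_n(\theta)|^{1/2}$ to be Jeffreys' prior and form the Jeffreys mixtures
\[
Q_J(u^k) \;=\; \int_{\Theta^d} \mu_\theta(u^k)\, w_J(\theta)\, d\theta, \qquad k \in \{m,\, n+m\}.
\]
Since both $S_2$ and $R$ share $y^m$, the natural code is Shannon--Fano on the conditional mixture $Q_J(x^n \mid y^m) = Q_J(y^m,x^n)/Q_J(y^m)$, which costs at most one extra bit over $-\log Q_J(x^n \mid y^m)$. Using $\mu_\theta(Y^m,X^n) = \mu_\theta(Y^m)\mu_\theta(X^n)$, the expected redundancy telescopes as
\begin{align*}
R_n(l_{n|m},\theta) &= \mb{E}\log\frac{\mu_\theta(Y^m,X^n)}{Q_J(Y^m,X^n)} \\
&\quad -\; \mb{E}\log\frac{\mu_\theta(Y^m)}{Q_J(Y^m)} + O(1),
\end{align*}
and each expectation is precisely the pointwise redundancy of Jeffreys-mixture universal coding of a sequence of length $n+m$ and $m$, respectively. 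Plugging in the Clarke--Barron/Atteson expansion from Theorem~\ref{thm:NM} to each (with error $O(1/(n+m))$ and $O(1/m)$), the $-\frac{d}{2}\log(2\pi e)$ terms and the $\log\int|\mc{I}_n(\theta)|^{1/2}d\theta$ Fisher--integral constants cancel, leaving $\frac{d}{2}\log(n+m) - \frac{d}{2}\log m + O(1/m + 1/n) = \frac{d}{2}\log(1+n/m) + O(1/m+1/n)$ uniformly in $\theta$.

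For the converse I would invoke the classical minimax--capacity duality: up to the $O(1)$ Kraft gap, $\bar{R}_{\CM}(n,m) \geq \sup_{w} I_w(\Theta; X^n \mid Y^m)$, and specialize $w$ to Jeffreys' prior again. Writing
\[
I_{w_J}(\Theta;X^n \mid Y^m) = I_{w_J}(\Theta;X^n,Y^m) - I_{w_J}(\Theta;Y^m)
\]
and noting that Theorem~\ref{thm:NM} is proved exactly by identifying each such mutual information with the Clarke--Barron asymptotic expansion, the same differencing produces the matching lower bound $\frac{d}{2}\log(1 + n/m) + O(1/m + 1/n)$.

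The main obstacle will be making the error term honestly $O(1/m + 1/n)$ rather than merely $o(1)$, and doing so uniformly in $\theta$. The Laplace-type expansions behind Theorem~\ref{thm:NM} require nonsingular $\mc{I}_n(\theta)$, bounded third derivatives of $\log\mu_\theta$, and posterior tail control outside a $1/\sqrt{k}$ neighborhood of the MLE; these must hold simultaneously for $k = m$ and $k = n+m$, so the delicate regime is small $m$. Once both expansions are under uniform control the telescoping subtraction is routine and delivers the stated bound.
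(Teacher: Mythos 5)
Your proposal is correct and follows essentially the same route as the paper: the converse is exactly the paper's Lemma~\ref{lem:CTW_LB} (redundancy--capacity duality plus the chain-rule decomposition $I(X^n;\theta|Y^m)=I(X^n,Y^m;\theta)-I(Y^m;\theta)$ evaluated at Jeffreys' prior and Theorem~\ref{thm:NM}), while your conditional Jeffreys-mixture achievability is the same idea as the paper's KT-estimator-plus-Shannon-code upper bound, since the KT code conditioned on the memory is precisely the ratio mixture $Q_J(y^m,x^n)/Q_J(y^m)$. The only caveat, which the paper shares by ignoring integer length constraints, is that the one-bit Shannon--Fano gap must be dropped for the error term to be $O(1/m+1/n)$ rather than $O(1)$.
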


In the next proposition, we confine ourselves to strictly lossless codes in the {\WDCM} strategy.
\begin{proposition}
The average minimax redundancy of strictly lossless {\WDCM} coding strategy is equal to that of {\NM} coding strategy. That is $\bar{R}_\DCM (n,m)= \bar{R}_\NM(n)$.
\label{thm:DCM}
\end{proposition}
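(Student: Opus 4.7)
My plan is to show that in the strictly lossless regime, the decoder's memory of $y^m$ cannot be leveraged at all, so the space of admissible length functions for {\WDCM} collapses to the space $L_n$ of standard universal length functions, which gives the equality immediately.

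First, I would unpack what a strictly lossless code means under the {\WDCM} constraints. The encoder at $S_2$ has access only to $x^n$ and the integer $m$, not to the actual sequence $y^m$; so the codeword map can be written as $\hat{c}_{n|m}:\mathcal{A}^n\to\{0,1\}^*$, i.e., a function of $x^n$ alone. The decoder's reverse map $\hat{d}_{n|m}:\{0,1\}^*\times\mathcal{A}^m\to\mathcal{A}^n$ may depend on $y^m$. Strict losslessness (zero error) means that for every $(x^n,y^m)\in\mathcal{A}^n\times\mathcal{A}^m$ we must have $\hat{d}_{n|m}(\hat{c}_{n|m}(x^n),y^m)=x^n$.

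Next, I would argue that this forces $\hat{c}_{n|m}$ to be injective on $\mathcal{A}^n$, independently of $y^m$. Suppose for contradiction that there exist $x^n\neq\tilde{x}^n$ with $\hat{c}_{n|m}(x^n)=\hat{c}_{n|m}(\tilde{x}^n)$. Then for any fixed $y^m$, the decoder receives the same bitstring for both source sequences and cannot output both, violating strict losslessness for at least one of them. Hence the encoder map is injective, and (after the standard reduction to prefix codes through Kraft) the length function $\hat{l}_{n|m}$ is exactly a strictly lossless universal length function in the sense of $L_n$. This establishes $\hat{L}^0_{n|m}\subseteq L_n$. The reverse inclusion $L_n\subseteq\hat{L}^0_{n|m}$ is immediate, since any conventional strictly lossless code yields a valid {\WDCM} code that simply ignores $y^m$ at the decoder.

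Finally, since $\hat{L}^0_{n|m}=L_n$ and the redundancy functional $R_n(\cdot,\theta)=\mathbb{E}\hat{l}_{n|m}(X^n)-H_n(\theta)$ does not involve $y^m$ at all (the encoder map does not depend on it), the minimax problems defining $\bar{R}_\DCM(n,m)$ and $\bar{R}_\NM(n)$ are identical, yielding $\bar{R}_\DCM(n,m)=\bar{R}_\NM(n)$. No step of this plan poses a genuine technical obstacle; the only subtle point is the reduction of the ``decoder may depend on $y^m$'' freedom to the conclusion that $\hat{c}_{n|m}$ is $y^m$-independent \emph{and} injective, which is exactly the place where the strict-losslessness assumption is essential and which already signals why the almost-lossless setting (with $\PE>0$) must behave very differently.
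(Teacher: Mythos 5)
Your argument is correct and, once unwound, is essentially the argument the paper gestures at: the paper observes that the joint support of $(x^n,y^m)$ is all of $\mathcal{A}^n\times\mathcal{A}^m$ and then invokes known results on zero-error coding with side information at the decoder, whereas you supply the underlying elementary step explicitly (two sequences mapped to the same codeword cannot both be decoded for any fixed $y^m$, so the encoder map must be injective, hence $\hat{L}^0_{n|m}=L_n$ and the two minimax problems coincide). The one point you assert rather than derive is precisely the paper's only substantive observation: you declare at the outset that strict losslessness requires $\hat{d}_{n|m}(\hat{c}_{n|m}(x^n),y^m)=x^n$ for \emph{every} pair in $\mathcal{A}^n\times\mathcal{A}^m$. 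In the paper's framework, strictly lossless is the $p_e=0$ instance of an expected-error criterion, so to pass to worst-case correctness over all pairs you need the fact that the parametric source assigns strictly positive probability to every pair $(x^n,y^m)$; this full-support property is what makes the side information useless in the zero-error regime. Without it the proposition would be false --- a restricted joint support is exactly what allows zero-error gains from decoder-only side information in Slepian--Wolf-type settings --- so that one sentence should be stated explicitly. With it added, your proof is a correct, self-contained version of the paper's citation-based argument, and your closing remark correctly identifies why the almost lossless case ($p_e>0$) escapes this rigidity.
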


Finally, in the case of almost lossless {\WDCM}, our main result is given in the following theorem.
\begin{theorem}
The average minimax redundancy of almost lossless {\WDCM} coding strategy is upper bounded by
\begin{equation}
\bar{R}^{\PE}_\WDCM(n,m)\hspace{-0.01in}\leq \hspace{-0.01in} \bar{R}_{\CM}(n,m)\hspace{-0.01in} +\hspace{-0.01in} \mc{F}(d,\PE) +\hspace{-0.02in} O\left( \hspace{-0.01in}\frac{1}{m}\hspace{-0.01in}+\hspace{-0.01in}\frac{1}{n}\right)\hspace{-0.02in},\nonumber
\end{equation}
where $\mc{F}(d,\PE)$ is the penalty term due to the encoders not communicating, which is given by
\begin{equation}
\mc{F}(d,\PE) =  \frac{d}{2} \log \left( 1  + \frac{2}{d\log e} \log \frac{1}{\PE} \right).
\end{equation}
\label{thm:WDCM}
\end{theorem}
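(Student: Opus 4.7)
The plan is to prove the stated upper bound by exhibiting an explicit almost lossless coding scheme for the \WDCM\ setting and bounding its worst-case redundancy. The scheme marries the strictly lossless \CM\ code from Theorem~\ref{thm:CM} with a Slepian--Wolf style random binning/hashing mechanism that lets the decoder at $R$ exploit its memory $y^m$ to resolve the residual uncertainty about the parameter estimate used by the encoder at $S_2$. The $\bar{R}_{\CM}(n,m)$ term in the bound will come from the \CM\ code, while the $\mc F(d,\PE)$ term will come from the extra bits that the encoder must spend to ``identify itself'' because it cannot see $y^m$.

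The construction fixes a $\delta$-grid $\{\theta_k\}$ on $\Theta^d$ together with, for each $k$, a Shannon length function $l_n^{(k)}(\cdot)$ tuned to $\mu_{\theta_k}$. The encoder computes the MLE $\hat\theta(x^n)$, picks the nearest grid point $\theta_{k^\star}$, and transmits both $l_n^{(k^\star)}(x^n)$ and a short random hash/bin label of $k^\star$. The decoder uses $y^m$ to build a candidate list $\mc L(y^m)$ of grid points lying in a confidence ball around $\hat\theta(y^m)$; by the $O(1/\sqrt{m})$ concentration of $\hat\theta(y^m)$ around $\theta$, the true $\theta_{k^\star}$ lies in $\mc L(y^m)$ with probability at least $1-\PE/2$. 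The decoder then uses the hash to disambiguate among the elements of $\mc L(y^m)$ and decodes $x^n$ from $l_n^{(k^\star)}(x^n)$.

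For the redundancy analysis, the length-function component is handled exactly as in the proof of Theorem~\ref{thm:CM}: with $\delta\sim 1/\sqrt{n+m}$, the deviation $-\log\mu_{\theta_{k^\star}}(x^n)-H_n(\theta)$ is, after averaging and worst-casing over $\theta$, bounded by $\bar{R}_{\CM}(n,m)+O(1/n+1/m)$. The error probability decomposes into two contributions: (i) the event $\theta_{k^\star}\notin\mc L(y^m)$, controlled by the joint concentration of $\hat\theta(x^n)$ and $\hat\theta(y^m)$ around the true $\theta$, and (ii) collision of the hash inside $\mc L(y^m)$, whose probability is the size of the list times $2^{-(\text{hash length})}$.

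The main obstacle, I expect, is calibrating the confidence ball (and hence the hash length) so as to obtain $\mc F(d,\PE)$ rather than the loose $\log|\mc L(y^m)|+\log(1/\PE)$ that a crude union bound would give. A worst-case ball of radius $O(1/\sqrt{m})$ yields $|\mc L(y^m)|=\Theta((n/m)^{d/2})$ candidates and the loose overhead. To recover the sharper $\frac{d}{2}\log\!\left(1+\frac{2}{d\log e}\log\frac{1}{\PE}\right)$, the ball must instead be sized by the Chernoff tail of the $d$-variate MLE $\sqrt{n}(\hat\theta(x^n)-\theta)$: allowing a failure mass $\PE$ lets one commit to a Fisher-metric radius of order $\sqrt{\log(1/\PE)/n}$, so that the log-volume ratio of the resulting ellipsoid to the \CM\ ellipsoid is exactly $\mc F(d,\PE)$. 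Making this Fisher-geometric argument rigorous --- coupling it to the $\delta$-grid quantization and verifying that the cross terms are absorbed by the $O(1/m+1/n)$ remainder --- is the technical heart of the proof.
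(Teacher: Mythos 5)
Your construction is a genuinely different mechanism from the paper's, and as described it does not reach the claimed bound. The paper's encoder transmits \emph{no} bin label at all: the parameter space is partitioned into minimal-volume ellipsoids $\{\phi:\, r(\phi-\hat{\theta}_Y)'\mc{I}_m(\hat{\theta}_Y)(\phi-\hat{\theta}_Y)\le\delta_d(\PE)\}$ with $r=\frac{nm}{n+m}$ and $\Gamma(\frac d2,\delta_d(\PE))=\PE\,\Gamma(\frac d2)$, each sequence is encoded by a universal code restricted to its own ellipsoid, and the decoder selects the ellipsoid from $\hat{\theta}_Y$ alone; the whole gain is booked as $-\log P_{\mc S}(\PE)$, the Jeffreys-prior mass of one cell (Lemma~\ref{lemma:ellipsoid}), and with $\delta_d(\PE)\approx\frac d2\log e+\log\frac1\PE$ the constants collapse exactly to $\bar{R}_{\CM}(n,m)+\mc F(d,\PE)$. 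In your scheme the encoder additionally sends a random hash of the quantized parameter index, and that is where the bound is lost: with your own error accounting (collision probability equal to list size times $2^{-h}$), driving the collision probability below $\PE_2$ forces $h\ge\log|\mc L(y^m)|+\log(1/\PE_2)$ with $\PE_1+\PE_2\le\PE$. Resizing the confidence ball via Chernoff tails only controls the $\log|\mc L(y^m)|$ term (and, incidentally, the relevant radius is set by the combined precision $\frac{nm}{n+m}$ of $\hat{\theta}_X$ around $\hat{\theta}_Y$, not by $n$ alone); the collision term leaves an irreducible extra additive $\log(1/\PE)$ --- about twenty bits at $\PE=10^{-6}$ --- on top of $\mc F(d,\PE)$, whereas in the theorem $\log(1/\PE)$ appears only once, inside $\mc F$. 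So your proposed fix attacks the wrong term; the only way to remove the collision cost is to remove the hash and let $y^m$ alone determine the decoding region, which is precisely the paper's construction.

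A secondary issue is constant-level precision, which the theorem requires since its remainder is $O(\frac1m+\frac1n)$. A two-part code with nearest-grid-point quantization at spacing $\delta$ pays a quantization loss of order $n\delta^2$ plus an index cost of order $d\log(\mathrm{radius}/\delta)$; optimizing $\delta$ leaves residual constants (of order $\frac d2\log e$ and quantizer-geometry terms) that are not absorbed into $O(\frac1m+\frac1n)$ and do not appear in the stated bound. The paper avoids quantization altogether: it works with the approximately Gaussian law of the ML estimate given $\hat{\theta}_Y$ (inverse covariance $\frac{nm}{n+m}\mc{I}_m(\hat{\theta}_Y)$), and evaluates the exact Jeffreys measure of the $(1-\PE)$-confidence ellipsoid, $P_{\mc S}(\PE)=\frac{C_d}{\int|\mc I(\beta)|^{1/2}d\beta}\left(\frac{2\delta_d(\PE)}{r\log e}\right)^{\frac d2}$, which is what makes the additive penalty come out as exactly $\mc F(d,\PE)$ rather than $\mc F(d,\PE)$ plus unaccounted constants.
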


\section{Discussion on the Results}
\label{sec:discussion}
In this section, we provide some discussion on the significance of the results for different UC-DIS coding strategies. Figures~\ref{fig:distributed_memoryless} and~\ref{fig:distributed_markov} demonstrate the redundancy rate for the three
coding strategies, namely, {\NM}, {\CM}, and {\WDCM} for memoryless sources and first-order Markov sources with alphabet
size $k=256$, respectively. In the case of {\NM}, Theorem~\ref{thm:NM} defines the achievable average minimax redundancy
for the compression of a sequence of length $n$ encoded without regard to the previously seen sequence $y^m$.

According to Theorem~\ref{thm:CM}, if the encoder and the decoder have access to a common memory $y^m$, i.e., {\CM} coding strategy, the average minimax redundancy could be much smaller than that of {\NM} depending on how large $m$ is. In particular, when $m \to \infty$ we have $\lim_{m \to \infty} \bar{R}_{\CM}(n,m) = 0$.\footnote{In this paper, we ignored the integer constraint on the length functions, which results in a negligible $O(1)$ redundancy analyzed in~\cite{Precise_Minimax_Redundancy, huffman_redundancy}.} This corresponds to the case where the parameter vector is known to both the encoder and the decoder, and thus, the redundancy is zero similar to a perfect Shannon code. Hence, the fundamental limits are those of known source parameters and universality no longer imposes a compression overhead. This is also demonstrated in Figs.~\ref{fig:distributed_memoryless} and~\ref{fig:distributed_markov}, where $m$ has been chosen to be sufficiently large.

Proposition~\ref{thm:DCM} demonstrates that if strictly lossless {\WDCM} coding strategy (i.e., $\PE = 0$) is to be used for the compression of $x^n$ from $S_2$, the memorization of $y^m$ from $S_1$ only at the decoder does not provide any compression benefit, assuming that the two encoders at $S_1$ and $S_2$ do not communicate. In other words, the best that $S_2$ can do is to simply apply a traditional universal compression on $x^n$.

Finally, according to Theorem~\ref{thm:WDCM}, unlike the asymptotic behavior of the Slepian-Wolf problem, the distributed nature in this problem incurs an extra redundancy on the compression. As can be seen in Fig.~\ref{fig:distributed_memoryless}, the overhead can be significant in the compression of memoryless sources. For example, when $n=512$B, $m=32$kB, and $\PE = 10^{-6}$, the redundancy rate is around $0.05$, as compared with the almost zero redundancy rate of {\CM}.
On the other hand, as demonstrated in Fig.~\ref{fig:distributed_markov}, when $d$ is relatively larger, for medium length sequences even with extremely small error probability, {\WDCM} performs fairly close to {\CM}. Further, {\WDCM} by far outperforms {\NM} in the compression of short to medium length sequences with reasonable permissible error probability, justifying usefulness of {\WDCM} in practice.
If $\log \frac{1}{\PE} \ll d$, the penalty term can be further simplified to be approximately equal to $\mc{F}(d,\PE) \approx \log \frac{1}{\PE}$ for the practical ranges of $\PE$.

\begin{figure}[tb]
\centering
\vspace{-0.05in}
\psfrag{ylabel}{\hspace{-0.2in}\footnotesize{Redundancy rate}}
\psfrag{xlabel}{\footnotesize{$n$}}
\psfrag{dataaaaaaaaaaaaaaaaaaaaaaa1}{\tiny{\NM}}
\psfrag{data2}{\tiny{{\WDCM} ($\PE = 10^{-40}$)}}
\psfrag{data3}{\tiny{{\WDCM} ($\PE = 10^{-6}$)}}
\psfrag{data4}{\tiny{\CM}}
\psfrag{bb1}{\tiny{$512$B}}
\psfrag{bb2}{\tiny{$4$kB}}
\psfrag{bb3}{\tiny{$32$kB}}
\psfrag{bb4}{\tiny{$256$kB}}
\epsfig{width=\figurewidth,file=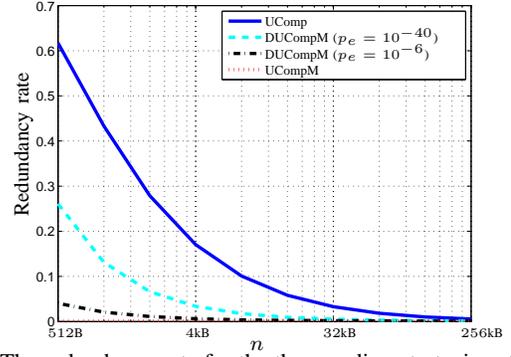}
\vspace{-0.15in}
\vspace{-5pt}
\caption{\small{The redundancy rate for the three coding strategies of interest for the UC-DIS problem. Memory size is $m=32$kB and the source is memoryless with alphabet size $k=256$.}}
\vspace{-.1in}
\label{fig:distributed_memoryless}
\end{figure}

\begin{figure}[tb]
\centering
\vspace{-0.05in}
\psfrag{ylabel}{\hspace{-0.2in}\footnotesize{Redundancy rate}}
\psfrag{xlabel}{\footnotesize{$n$}}
\psfrag{dataaaaaaaaaaaaaaaaaaaaaaa1}{\tiny{\NM}}
\psfrag{data2}{\tiny{{\WDCM} ($\PE = 10^{-40}$)}}
\psfrag{data3}{\tiny{{\WDCM} ($\PE = 10^{-6}$)}}
\psfrag{data4}{\tiny{\CM}}
\psfrag{bb1}{\tiny{$128$kB}}
\psfrag{bb2}{\tiny{$1$MB}}
\psfrag{bb3}{\tiny{$8$MB}}
\psfrag{bb4}{\tiny{$64$MB}}
\epsfig{width=\figurewidth,file=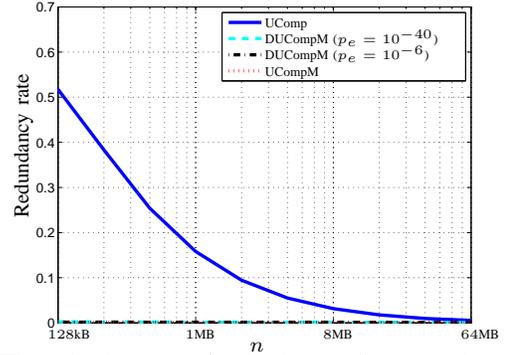}
\vspace{-0.15in}
\vspace{-5pt}
\caption{\small{The redundancy rate for the three coding strategies of interest for the UC-DIS problem. Memory size is $m=16$MB and the source is first-order Markov with alphabet size $k=256$.}}
\vspace{-.2in}
\label{fig:distributed_markov}
\end{figure}

\section{Technical Analysis}
\label{sec:analysis}
\subsection{Sketch of the Proof of Theorem~\ref{thm:CM}}
We prove that the RHS is both an upper bound and a lower bound for $\bar{R}_\CM(n,m)$. The upper bound is obtained using the KT-estimator~\cite{KT_estimator} along with a proper Shannon code~\cite{Shannon_paper} and the proof follows the analysis of the redundancy of the KT-estimator.
In the next lemma, we obtain the lower bound.
\begin{lemma}
The average minimax redundancy of {\CM} is lower-bounded by
\begin{equation}
\bar{R}_\CM(n,m) \geq \frac{d}{2} \log\left( 1+ \frac{n}{\MM}\right) + O\left( \frac{1}{\MM} + \frac{1}{n}\right).\nonumber
\vspace{-0.14in}\end{equation}
\label{lem:CTW_LB}
\end{lemma}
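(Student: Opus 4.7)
The plan is to obtain the lower bound by a reduction: from any strictly lossless \CM{} length function for $x^n$ with memorized side information $y^m$, build a strictly lossless \NM{} length function for the concatenated sequence $(y^m,x^n)$ of length $m+n$, and then invoke Theorem~\ref{thm:NM} twice.

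Concretely, I would fix any $l_{n|\MM}\in L_{n|\MM}$ and let $l^*_m$ be a strictly lossless \NM{} length function achieving $\bar{R}_\NM(m)$. Define a length function $L$ on $\mc{A}^{m+n}$ that outputs the codeword $l^*_m(y^m)$ followed by the codeword $l_{n|\MM}(x^n)$ produced using $y^m$ as side information. Since $l^*_m$ is prefix-decodable, the decoder recovers $y^m$ first and then uses it as common side information to recover $x^n$, so $L$ is a valid strictly lossless code on $\mc{A}^{m+n}$. Because $S_1$ and $S_2$ are identical but independent realizations of $\mu_\theta$, the sequences $Y^m$ and $X^n$ are conditionally independent given $\theta$; hence $H_{m+n}(\theta)=H_m(\theta)+H_n(\theta)$ and the redundancy decomposes pointwise as
\[
R_{m+n}(L,\theta)=R_m(l^*_m,\theta)+R_n(l_{n|\MM},\theta).
\]
Taking $\sup_\theta$, subadditivity of the supremum together with the optimality of $l^*_m$ gives
\[
\sup_\theta R_{m+n}(L,\theta)\;\leq\;\bar{R}_\NM(m)\;+\;\sup_\theta R_n(l_{n|\MM},\theta).
\]
Since $L$ is itself an \NM{} code on sequences of length $m+n$, its worst-case redundancy is at least $\bar{R}_\NM(m+n)$. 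Minimizing over $l_{n|\MM}\in L_{n|\MM}$ on the right then yields the chain-rule inequality
\[
\bar{R}_\CM(n,m)\;\geq\;\bar{R}_\NM(m+n)-\bar{R}_\NM(m).
\]

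To finish, I would substitute the Clarke-Barron-Atteson expansion from Theorem~\ref{thm:NM} into both terms. The sequence-length-independent Fisher-information integral cancels under the subtraction, the leading $\frac{d}{2}\log\tfrac{N}{2\pi e}$ terms combine into $\frac{d}{2}\log\tfrac{m+n}{m}=\frac{d}{2}\log(1+n/m)$, and the residuals $O(1/(m+n))$ and $O(1/m)$ are absorbed into the claimed $O(1/m+1/n)$ error. The main bookkeeping point, and the one I expect to require the most care, is justifying that the $\log\int|\mc{I}_n(\theta)|^{1/2}\,d\theta$ term in Theorem~\ref{thm:NM} is truly $n$-independent (as in the Clarke-Barron-Atteson asymptotics for smooth parametric families), so that it cancels cleanly under subtraction; a minor additional check is that the integer-length rounding in the concatenation contributes only a benign $O(1)$ overhead dominated by the stated error.
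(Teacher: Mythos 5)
Your proposal is correct, but it reaches the key intermediate inequality $\bar{R}_\CM(n,m) \geq \bar{R}_\NM(n+m) - \bar{R}_\NM(m)$ by a genuinely different route than the paper. The paper's proof is information-theoretic: it invokes the redundancy--capacity theorem to write $\bar{R}_\CM(n,m) = \sup_{\omega(\theta)} I(X^n;\theta|Y^\MM) = \sup_{\omega(\theta)}\{I(X^n,Y^\MM;\theta) - I(Y^\MM;\theta)\}$, lower-bounds the supremum by plugging in Jeffreys' prior, and identifies the two mutual informations with $\bar{R}_\NM(n+\MM)$ and (at most) $\bar{R}_\NM(\MM)$, which requires the asymptotic capacity-achieving property of Jeffreys' prior. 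You instead obtain the same inequality operationally: concatenate a minimax-optimal length-$m$ code with the conditional code $l_{n|\MM}$, observe that the result is a strictly lossless code for the length-$(m+n)$ problem, and use subadditivity of the supremum plus the decomposition $H_{m+n}(\theta)=H_m(\theta)+H_n(\theta)$. Your route is more elementary and, for memoryless families, the intermediate inequality is exact and non-asymptotic (no appeal to Jeffreys' prior or to the redundancy--capacity identity), at the cost of some coding bookkeeping that you correctly flag: the length-$m$ code must be prefix-decodable (harmless under the paper's idealized, Kraft-type length functions), and the $n$-dependence of $\log\int|\mc{I}_n(\theta)|^{1/2}d\theta$ must be controlled so the constant cancels. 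One point you assert implicitly and should note: identifying the independent pair $(Y^\MM, X^n)$ with a single length-$(m+n)$ sequence from $\mu_\theta$ is exact only for memoryless sources; for Markov families there is a boundary effect (e.g., $H_{m+n}(\theta) \neq H_m(\theta)+H_n(\theta)$ exactly), which must be absorbed into the error term -- the paper's own proof makes the same identification, so this is a shared, not a distinguishing, technicality. The final step, substituting Theorem~\ref{thm:NM} into both terms so that the Fisher-information constants cancel and the leading terms combine into $\frac{d}{2}\log\left(1+\frac{n}{\MM}\right)$, is the same in both arguments.
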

\begin{proof}
It can be shown that the minimax redundancy is equal to the capacity of the channel between the unknown parameter vector $\theta$ and the sequence $x^n$ given the sequence $y^\MM$ (cf. \cite{Merhav_Feder_IT} and the references therein). Thus,
\begin{eqnarray}
\bar{R}_\CM(n,m)
&\hspace{-5pt} = \hspace{-5pt}
&\sup_{\omega(\theta)} I(X^n;\theta|Y^\MM)\nonumber\\
&\hspace{-5pt} = \hspace{-5pt} & \sup_{\omega(\theta)} \{I(X^n, Y^\MM;\theta) - I(Y^{\MM};\theta)\}\nonumber\\
&\hspace{-5pt} \geq  \hspace{-5pt} & \{I(X^n, Y^\MM;\theta) - I(Y^{\MM};\theta)\}|_{\theta \propto \omega_\text{J}(\theta)}\nonumber\\
&\hspace{-5pt} = \hspace{-5pt} & \bar{R}_\NM({n+\MM}) - \bar{R}_\NM({\MM}),
\label{eq:max_I}
\end{eqnarray}
where $\omega_\text{J}(\theta) \triangleq \frac{|\mc{I}(\theta)|^{\frac{1}{2}}}{\int |\mc{I}(\beta)|^{\frac{1}{2}}d\beta}$ denotes the Jeffreys' prior, and  $\bar{R}_\NM(\cdot)$ is given in Theorem~\ref{thm:NM}. Further simplification of~(\ref{eq:max_I}) leads to the desired result in Lemma~\ref{lem:CTW_LB}.
\end{proof}

\subsection{Sketch of the Proof of Proposition~\ref{thm:DCM}}
Since the source is assumed to be from the family $\mc{P}^d$ of $d$-dimensional parametric sources, in particular, it is also an {\em ergodic} source. Thus, any pair $(x^n, y^m)$ occurs with non-zero probability and the support set of $(x^n,y^m)$ is equal to $\mc{A}^n \times \mc{A}^m$. Therefore, Proposition~\ref{thm:DCM} trivially follows from the known results on strictly lossless compression (cf.~\cite{Alon_source_coding} and the references therein).

\subsection{Sketch of the Proof of Theorem~\ref{thm:WDCM}}
We provide a constructive optimal coding strategy at the encoder and obtain its achievable average minimax redundancy, which provides with an upper bound on the average minimax redundancy of the almost lossless {\WDCM} coding strategy.

Let $\hat{\theta}(x^n)$ (or $\hat{\theta}(y^m)$) denote the Maximum Likelihood (ML) estimate for the unknown source parameter given that the sequence $x^n$ (or $y^m$) is observed, i.e., $\hat{\theta}(x^n) \triangleq \arg\max_\lambda \mu_\lambda(x^n)$. Further, let $\hat{\theta}_X \triangleq \hat{\theta}(x^n)$ and $\hat{\theta}_Y \triangleq \hat{\theta}(y^m)$.
As discussed earlier $\mu_\theta(x^n)$ is the probability distribution induced by the parameter vector $\theta$ on the sequence $x^n$. It is straightforward to derive the pmf of the ML-estimate $p(\hat{\theta}_X|\theta)$ from $\mu_\theta(x^n)$ by summing over all the sequences that correspond to the same ML-estimate. Note that $\hat{\theta}_X$ follows a discrete distribution only taking values on a finite set of $(n+1)^d$ points in the space $\Theta^d$.
For any $\lambda, \theta \in \Theta^d$, let $D_n(\mu_\lambda||\mu_\theta)$ be the KL-divergence, i.e.,
$
D_n(\mu_\lambda||\mu_\theta) \triangleq \mb{E} \log\left( \frac{\mu_\theta(X^n)}{\mu_\lambda(X^n)}\right)
$.
It can be shown that  expectations with respect to $p(\hat{\theta}_X|\theta)$ can be performed using a continuous RV $\tilde{\theta}_X$ (with  uniformly vanishing error) whose distribution conditioned on $\theta$ is given by
\begin{equation}
p(\tilde{\theta}_X|\theta) = |\mc{I}_n(\tilde{\theta}_X)|^{\frac{1}{2}} \left(\frac{n}{2\pi} \right)^{\frac{d}{2}} \exp(-D_n(\mu_{\tilde{\theta}_X}||\mu_\theta)),
\end{equation}
where $n$ has to be large enough so that Stirling's approximation can be applied.
Further, it is straightforward to show that this distribution can be approximated using a Gaussian distribution with mean $\theta$ and inverse covariance matrix $n \mc{I}_n(\theta)$.

Next, we will obtain an approximation for the distribution of $\hat{\theta}_X$ conditioned on $\hat{\theta}_Y$.
\vspace{-.04in}
\begin{lemma}
Let $\hat{\theta}_X$ and $\hat{\theta}_Y$ denote the ML-estimate parameter given observed sequences $x^n$ and $y^m$, respectively. Further, let $p(\tilde{\theta}_X | \hat{\theta}_Y)$ follow a Gaussian distribution with mean $\hat{\theta}_Y$ and inverse covariance matrix $\frac{nm}{n+m} \mc{I}_m(\hat{\theta}_Y)$. Then, all expectations with respect to $p(\hat{\theta}_X | \hat{\theta}_Y)$ can be performed using $p(\tilde{\theta}_X | \hat{\theta}_Y)$ with uniformly vanishing error.
\end{lemma}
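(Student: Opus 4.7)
The plan is to derive the claimed Gaussian conditional law by marginalizing the unknown parameter $\theta$ against Jeffreys' prior $\omega_\text{J}(\theta) \propto |\mc{I}(\theta)|^{1/2}$, which is already the natural prior used in Lemma~\ref{lem:CTW_LB} to attain the minimax. First I would write
\begin{equation}
p(\tilde{\theta}_X \mid \hat{\theta}_Y) \;=\; \int p(\tilde{\theta}_X \mid \theta)\, p(\theta \mid \hat{\theta}_Y)\, d\theta,\nonumber
\end{equation}
which is legitimate because $X^n$ and $Y^m$ are independent given $\theta$, hence so are $\hat{\theta}_X$ and $\hat{\theta}_Y$. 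The posterior $p(\theta \mid \hat{\theta}_Y)$ is obtained by applying the same Stirling-type approximation that produced the displayed formula for $p(\tilde{\theta}_X|\theta)$ preceding the lemma, now to $y^m$: the likelihood $p(\hat{\theta}_Y \mid \theta)$ is approximately Gaussian with mean $\theta$ and precision $m\,\mc{I}_m(\theta)$, and because the Jeffreys factor $|\mc{I}(\theta)|^{1/2}$ varies only on the macroscopic scale while the likelihood concentrates on an $O(1/\sqrt{m})$ neighborhood of $\hat{\theta}_Y$, the posterior is itself approximately Gaussian with mean $\hat{\theta}_Y$ and precision $m\,\mc{I}_m(\hat{\theta}_Y)$.

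Next I would use the decomposition $\tilde{\theta}_X = \theta + (\tilde{\theta}_X - \theta)$. Conditional on $\hat{\theta}_Y$, the first summand is approximately Gaussian with mean $\hat{\theta}_Y$ and precision $m\,\mc{I}_m(\hat{\theta}_Y)$ by the previous paragraph, while the second summand is conditionally (on $\theta$) independent of $\hat{\theta}_Y$ with zero mean and precision $n\,\mc{I}_n(\theta)$ by the Gaussian approximation to $p(\tilde{\theta}_X \mid \theta)$ already stated in the excerpt. Replacing $\mc{I}_n(\theta)$ by $\mc{I}_m(\hat{\theta}_Y)$ to leading order (valid on the $O(1/\sqrt{m})$ support of the posterior) and summing the two independent Gaussian covariances yields
\begin{equation}
\mathrm{Cov}(\tilde{\theta}_X \mid \hat{\theta}_Y) \;\approx\; \tfrac{1}{m}\mc{I}_m(\hat{\theta}_Y)^{-1} + \tfrac{1}{n}\mc{I}_m(\hat{\theta}_Y)^{-1} \;=\; \tfrac{n+m}{nm}\,\mc{I}_m(\hat{\theta}_Y)^{-1},\nonumber
\end{equation}
so the precision is $\frac{nm}{n+m}\mc{I}_m(\hat{\theta}_Y)$ and the mean is $\hat{\theta}_Y$, matching the statement of the lemma.

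For the uniformly vanishing error claim I would track three separate approximations: (i) the Stirling-based replacement of the discrete pmf $p(\hat{\theta}_X \mid \theta)$ by the continuous density $p(\tilde{\theta}_X \mid \theta)$, already granted in the excerpt with uniform error; (ii) the local constancy of $|\mc{I}(\theta)|^{1/2}$ over the $O(1/\sqrt{m})$-wide posterior window, which is a standard Laplace-type estimate under the Clarke--Barron regularity conditions implicit in Theorem~\ref{thm:NM}; and (iii) the replacement of $\mc{I}_n(\theta)$ by $\mc{I}_m(\hat{\theta}_Y)$ inside the convolution, whose relative error is $O(1/\sqrt{n}+1/\sqrt{m})$. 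Summing these shows that expectations under the true $p(\hat{\theta}_X \mid \hat{\theta}_Y)$ agree with those under the stated Gaussian up to a uniform $o(1)$ error.

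The step I expect to be the main obstacle is (ii): making the Laplace-style flattening of Jeffreys' prior uniform across the parameter space. This is where one must either compactify $\Theta^d$ or impose uniform upper and lower bounds together with a Lipschitz condition on $|\mc{I}(\theta)|^{1/2}$, in the same spirit as the regularity assumptions underlying Theorem~\ref{thm:NM}; once these are granted, the rest is the routine Gaussian convolution summarized above.
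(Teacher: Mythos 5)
The paper never proves this lemma: it is asserted inside the proof sketch of Theorem~\ref{thm:WDCM} with no argument given, so there is no paper proof to diverge from. Your route --- conditional independence of $\hat{\theta}_X$ and $\hat{\theta}_Y$ given $\theta$, a Bernstein--von Mises/Laplace approximation making the posterior of $\theta$ given $\hat{\theta}_Y$ Gaussian with precision $m\,\mc{I}_m(\hat{\theta}_Y)$, and convolution with the $N\bigl(\theta, \tfrac{1}{n}\mc{I}_n(\theta)^{-1}\bigr)$ law of $\tilde{\theta}_X$ to get precision $\tfrac{nm}{n+m}\mc{I}_m(\hat{\theta}_Y)$ --- is exactly the reasoning the paper implicitly relies on, and your choice of Jeffreys' prior (which the lemma statement leaves implicit) is the right one, consistent with Lemma~\ref{lemma:ellipsoid}. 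Your closing caveat is also well placed: the uniformity of the error requires regularity of $\mc{I}(\theta)$ (and breaks down near the boundary of $\Theta^d$), a point the paper silently ignores, so your proposal is if anything more careful than the source.
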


Now, we are equipped to define $S_n(y^m, \PE)$ as the set with smallest Lebesgue volume such that
\begin{equation}
\int_{\tilde{\theta}_X \in \mc{S}_n(y^m,\PE)} p(\tilde{\theta}_X|\hat{\theta}_Y) d\tilde{\theta}_X \geq 1- \PE.
\end{equation}
The following lemma shows as to how $S_n(y^m, \PE)$ is determined.
\vspace{-.04in}
\begin{lemma}
Let $\hat{\theta}_Y$ denote the ML-estimate for the unknown parameter vector given sequence $y^m$ is observed. Then, $\mc{S}_n(y^m, \epsilon)$ is given by
\begin{equation}
\mc{S}_n(y^m, \PE)= \left\{\phi: r(\phi - \hat{\theta}_Y)' \mc{I}_m(\hat{\theta}_Y)(\phi - \hat{\theta}_Y) \leq  \delta_d(\PE) \right\},\nonumber
\end{equation}
where $r = \frac{nm}{n+m}$, $\delta_d(\PE)$ satisfies
$\Gamma\left(\frac{d}{2}, \delta_d(\PE)\right) = \PE \Gamma\left(\frac{d}{2}\right)$.\footnote{$\Gamma(s,x) \triangleq  \int_0^x{t^{s-1}e^{-t}dt}$ denotes the incomplete Gamma function.}
\end{lemma}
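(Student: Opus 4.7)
The plan is to first observe that the set $\mc{S}_n(y^m,\PE)$, defined as the smallest Lebesgue volume set whose $p(\tilde{\theta}_X|\hat{\theta}_Y)$-mass is at least $1-\PE$, admits a standard characterization: for any density $p$, any minimum-volume set achieving a prescribed probability mass is a superlevel set $\{\phi: p(\phi|\hat{\theta}_Y) \geq \tau\}$ of $p$. This is a Neyman--Pearson-type argument that compares any candidate set $A$ with the superlevel set $B$ of the same volume; on $B \setminus A$ the density exceeds $\tau$ while on $A \setminus B$ it does not, so the mass on $B$ is at least that on $A$. The optimality would carry over to our Gaussian density via the previous lemma, which asserts that $p(\tilde{\theta}_X|\hat{\theta}_Y)$ is Gaussian with mean $\hat{\theta}_Y$ and inverse covariance $r\,\mc{I}_m(\hat{\theta}_Y)$ with uniformly vanishing error.

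Next, I would identify the superlevel sets of this Gaussian. Because
\[
p(\phi|\hat{\theta}_Y) \propto \exp\!\left(-\tfrac{1}{2}(\phi-\hat{\theta}_Y)'\,r\,\mc{I}_m(\hat{\theta}_Y)\,(\phi-\hat{\theta}_Y)\right),
\]
the set $\{\phi: p(\phi|\hat{\theta}_Y) \geq \tau\}$ is precisely the ellipsoid
$\{\phi: r(\phi-\hat{\theta}_Y)' \mc{I}_m(\hat{\theta}_Y)(\phi-\hat{\theta}_Y) \leq c\}$
for some $c=c(\tau)$. Hence $\mc{S}_n(y^m,\PE)$ must be such an ellipsoid, and only the radius parameter $c = \delta_d(\PE)$ remains to be pinned down by the mass constraint.

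To determine $\delta_d(\PE)$, I would change variables to standardize. Letting $A$ denote the symmetric positive-definite square root of $r\,\mc{I}_m(\hat{\theta}_Y)$ and $u = A(\tilde{\theta}_X - \hat{\theta}_Y)$, the pushforward is $u \sim \mc{N}(0, I_d)$ and the ellipsoid becomes the Euclidean ball $\{u : \|u\|^2 \leq c\}$. Thus the quadratic form $Q = r(\tilde{\theta}_X-\hat{\theta}_Y)'\mc{I}_m(\hat{\theta}_Y)(\tilde{\theta}_X-\hat{\theta}_Y)$ is a $\chi^2_d$ random variable, and I would integrate its density in spherical coordinates, writing
\[
\Pr(Q \leq c) = \frac{S_{d-1}}{(2\pi)^{d/2}} \int_0^{\sqrt{c}} \rho^{d-1} e^{-\rho^2/2} \, d\rho,
\]
with $S_{d-1} = 2\pi^{d/2}/\Gamma(d/2)$, and reducing via the substitution $t = \rho^2/2$ to the (incomplete) Gamma form. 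Requiring this mass to equal $1 - \PE$ (equivalently, the complementary mass to equal $\PE$) yields the transcendental equation for $\delta_d(\PE)$ claimed in the lemma, up to the paper's convention for the incomplete Gamma function.

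The only real subtlety — and thus the step I expect to require the most care — is justifying that the minimum-volume property is preserved when we pass from the exact distribution $p(\hat{\theta}_X|\hat{\theta}_Y)$ of the discrete ML-estimate to the continuous Gaussian surrogate $p(\tilde{\theta}_X|\hat{\theta}_Y)$. Specifically, I would invoke the ``uniformly vanishing error'' clause of the preceding lemma to argue that the mass discrepancy between the two measures on any measurable set is $o(1)$, so that the superlevel-set characterization under the Gaussian gives the correct set $\mc{S}_n(y^m,\PE)$ up to terms absorbed by the overall $O(1/m + 1/n)$ error budget used throughout the redundancy analysis. All other steps — the Neyman--Pearson argument, the diagonalization, and the chi-squared/Gamma identity — are standard and essentially routine.
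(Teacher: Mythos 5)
The paper states this lemma without proof (it appears bare inside the proof sketch of Theorem~4), so there is no authorial argument to compare against; your route --- minimum-volume set of prescribed mass equals a superlevel set of the density by a Neyman--Pearson comparison, superlevel sets of the Gaussian surrogate are exactly the stated ellipsoids, and a radial change of variables turns the mass constraint into an incomplete-Gamma equation --- is the standard one and is surely what the authors intend. Your decision to flag the discrete-to-continuous transfer as the delicate step, and to handle it by the ``uniformly vanishing error'' clause of the preceding lemma, is also the right (and only available) move at this level of rigor.

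One concrete discrepancy you should reconcile before declaring victory: your computation gives $\Pr\{Q\le c\}=\gamma(d/2,\,c/2)/\Gamma(d/2)$ with $\gamma$ the lower incomplete Gamma function, so your mass constraint reads $\Gamma_{\mathrm{up}}\bigl(\tfrac{d}{2},\tfrac{c}{2}\bigr)=\PE\,\Gamma\bigl(\tfrac{d}{2}\bigr)$ with the \emph{upper} incomplete Gamma function and with the threshold \emph{halved}, whereas the lemma as printed asserts $\Gamma\bigl(\tfrac{d}{2},\delta_d(\PE)\bigr)=\PE\,\Gamma\bigl(\tfrac{d}{2}\bigr)$ with a footnote defining $\Gamma(s,x)=\int_0^x t^{s-1}e^{-t}dt$, i.e., the \emph{lower} one. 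Taken literally, the printed convention would make the ellipsoid carry mass $\PE$ rather than $1-\PE$, so the footnote must be read as $\int_x^\infty$; and the missing factor of $2$ (together with the $\log e$ that reappears in the volume factor $\bigl(2\delta_d(\PE)/(r\log e)\bigr)^{d/2}$ of the following lemma, and in the approximation $\delta_d(\PE)\approx\tfrac{d}{2}\log e+\log\tfrac{1}{\PE}$) indicates that the paper's quadratic form silently absorbs the constant $\tfrac{\log e}{2}$ coming from the local expansion of the base-$2$ KL divergence, $D_n(\mu_\phi\|\mu_\theta)\approx\tfrac{n\log e}{2}(\phi-\theta)'\mc{I}(\theta)(\phi-\theta)$. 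None of this invalidates your argument, but you should state explicitly which normalization of the Gaussian exponent you adopt, so that the $\delta_d(\PE)$ you solve for is the same object that feeds into Lemma~\ref{lemma:ellipsoid} and into the final penalty term $\mc{F}(d,\PE)$.
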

The next lemma determines the probability measure of the set $\mc{S}_n(y^m, \PE)$ under Jeffreys' prior.
\vspace{-.04in}
\begin{lemma}
Assume that the parameter vector $\theta$ follows Jeffreys' prior.
Then, the probability measure $P_\mc{S}(\PE)$ of the set $\mc{S}_n(y^m, \PE)$ is given by
\begin{equation}
P_\mc{S}(\PE) = \int_{\theta \in S_n(y^m,\PE)} \hspace{-.07in}\omega_\text{J}(\theta) d\theta= \frac{C_d}{\int |\mc{I}(\beta)|^\frac{1}{2}d\beta} \left(\frac{2\delta_d(\PE)}{r\log e}\right)^{\frac{d}{2}}\hspace{-.07in},\nonumber
\end{equation}
where $r = \frac{nm}{n+m}$ and $C_d = \frac{\Gamma\left(\frac{1}{2}\right)^d}{\Gamma\left( \frac{d}{2} + 1\right)}$.
\label{lemma:ellipsoid}
\end{lemma}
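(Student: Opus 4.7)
The plan is to evaluate $P_\mc{S}(\PE) = \int_{\mc{S}_n(y^m,\PE)} \omega_{\text{J}}(\theta)\,d\theta$ by a Laplace-type approximation: since the ellipsoid $\mc{S}_n(y^m,\PE)$ has principal semi-axes of order $\sqrt{\delta_d(\PE)/r}$ and therefore shrinks around $\hat\theta_Y$ as $r = nm/(n+m)$ grows, the integrand $|\mc{I}(\theta)|^{1/2}$ may be replaced by its value at the center of the ellipsoid. This reduces $P_\mc{S}(\PE)$ to $\omega_{\text{J}}(\hat\theta_Y)\cdot \mathrm{Vol}(\mc{S}_n(y^m,\PE))$ up to a multiplicative error $1+O(1/\sqrt{r})$ from the continuity of $\theta\mapsto|\mc{I}(\theta)|^{1/2}$ on the compact ellipsoid.

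The next step is to compute the Lebesgue volume of the ellipsoid. The standard formula for $\{u\in\mb{R}^d: u'M u \leq c\}$ gives $\pi^{d/2}c^{d/2}/(\Gamma(d/2+1)\,|M|^{1/2}) = C_d\, c^{d/2}/|M|^{1/2}$, which I apply with $M = r\mc{I}_m(\hat\theta_Y)$. The effective threshold $c$, however, is not simply $\delta_d(\PE)$: one must translate from the parameterization that defines $\delta_d(\PE)$ (through $\Gamma(d/2,\delta_d(\PE))=\PE\,\Gamma(d/2)$, which is a Gamma$(d/2,1)$ quantile) to the natural-log Gaussian quadratic form $(\phi-\hat\theta_Y)'r\mc{I}_m(\hat\theta_Y)(\phi-\hat\theta_Y)$ appearing in $p(\tilde\theta_X|\hat\theta_Y)$ from the preceding lemma. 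This translation contributes a factor of $2$ from the Gamma--chi-squared relation $\chi^2_d = 2W$ and a factor of $1/\log e$ from the bit-to-nat conversion implicit in the paper's Fisher information (since the quadratic expansion of $D_n$ in bits carries an extra $\log e$ relative to the natural-log Gaussian). Putting these together, the effective volume is $V = C_d\bigl(2\delta_d(\PE)/\log e\bigr)^{d/2}/(r^{d/2}|\mc{I}_m(\hat\theta_Y)|^{1/2})$.

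Finally I assemble the pieces: $\omega_{\text{J}}(\hat\theta_Y) = |\mc{I}(\hat\theta_Y)|^{1/2}/\int|\mc{I}(\beta)|^{1/2}d\beta$, and $\mc{I}_m(\hat\theta_Y) = \mc{I}(\hat\theta_Y)(1+o(1))$ as $m\to\infty$ (the per-symbol Fisher information converges to its stationary counterpart), so the determinant factors cancel and the product $\omega_{\text{J}}(\hat\theta_Y)\cdot V$ produces exactly $P_\mc{S}(\PE) = (C_d/\int|\mc{I}(\beta)|^{1/2}d\beta)(2\delta_d(\PE)/(r\log e))^{d/2}$.

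The main obstacle is the bookkeeping in the middle paragraph: one has to keep straight the paper's convention that $D_n$ and $\mc{I}$ are expressed in bits, whereas the Gaussian approximation used to define $\mc{S}_n(y^m,\PE)$ is written in natural units, and then reconcile this with the incomplete-Gamma definition of $\delta_d(\PE)$ so that the factors $2^{d/2}$ and $(\log e)^{-d/2}$ both emerge cleanly in front of the Lebesgue volume. Steps 1 and 3 reduce to routine continuity and asymptotic-equivalence arguments once the correct volume has been pinned down.
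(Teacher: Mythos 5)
Your proposal is correct and follows essentially the route the paper intends (the paper states this lemma without an explicit proof inside its sketch): freeze the Jeffreys' density at the center $\hat{\theta}_Y$ of the shrinking ellipsoid, multiply by the standard ellipsoid volume $C_d\,c^{d/2}/|M|^{1/2}$ with $M = r\,\mc{I}_m(\hat{\theta}_Y)$, and track the unit conversion so that the net factor $\left(2/\log e\right)^{d/2}$ appears while $|\mc{I}_m(\hat{\theta}_Y)|^{1/2}$ cancels against $|\mc{I}(\hat{\theta}_Y)|^{1/2}$ in the numerator of $\omega_\text{J}$. The only caveat is that your $2/\log e$ enters by reinterpreting the threshold $\delta_d(\PE)$ (bits-to-nats plus the Gamma--chi-squared factor of $2$) rather than through the paper's unstated convention relating $\mc{I}_m$ to $\mc{I}$, but either bookkeeping is consistent with the surrounding lemmas and yields exactly the stated expression for $P_\mc{S}(\PE)$.
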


Next, consider the following coding scheme. Let the space be partitioned into ellipsoids of the form $S_{n}(y^m,\PE)$. Then, each sequence is encoded within its respective ellipsoid without regard to the rest of the parameter space.
The decoder chooses the decoding ellipsoid using the ML estimate $\hat{\theta}_Y$ and the permissible decoding error probability $\PE$ . The probability measure covered by each ellipsoid is $P_\mc{S}(\PE)$ is independent of $\hat{\theta}_Y$, and provides with $- \log P_\mc{S}(\PE)$ reduction in the redundancy. Further, simplification of $P_\mc{S}(\PE)$ and the fact that $\delta_d(\PE)\approx \frac{d}{2} \log e + \log \frac{1}{\PE}$ will lead to the desired result.

\section{Conclusion}
\label{sec:conclusion}
In this paper, we introduced and studied the problem of Universal Compression of Distributed Identical Sources (UC-DIS), which is a more favorable framework as compared to the {\SW} (SW) framework in several applications, such as the compression of data from mirrors of a data server.
In UC-DIS, the correlation among outputs of the sources is due to the finite-length universal compression constraint, departing from the nature of the correlation in the SW framework.
For UC-DIS, involving two identical sources, we introduced {\WDCM} coding strategy (compression using the side information at the decoder when the two encoders do not communicate) and obtained an upper bound on its average minimax redundancy.
We demonstrated that for finite-length sequences with reasonable permissible error probability, {\WDCM} coding strategy by far outperforms traditional universal compression, and hence, justifying the usefulness of {\WDCM} coding strategy in practice.



\ifCLASSOPTIONcaptionsoff
  \newpage
\fi



%

\bibliographystyle{IEEEtran}
\bibliography{compress}

\begin{thebibliography}{10}
\providecommand{\url}[1]{#1}
\csname url@rmstyle\endcsname
\providecommand{\newblock}{\relax}
\providecommand{\bibinfo}[2]{#2}
\providecommand\BIBentrySTDinterwordspacing{\spaceskip=0pt\relax}
\providecommand\BIBentryALTinterwordstretchfactor{4}
\providecommand\BIBentryALTinterwordspacing{\spaceskip=\fontdimen2\font plus
\BIBentryALTinterwordstretchfactor\fontdimen3\font minus
  \fontdimen4\font\relax}
\providecommand\BIBforeignlanguage[2]{{%
\expandafter\ifx\csname l@#1\endcsname\relax
\typeout{** WARNING: IEEEtran.bst: No hyphenation pattern has been}%
\typeout{** loaded for the language `#1'. Using the pattern for}%
\typeout{** the default language instead.}%
\else
\language=\csname l@#1\endcsname
\fi
#2}}

\bibitem{slepian_wolf}
D.~Slepian and J.~K. Wolf, ``Noiseless coding of correlated information
  sources,'' \emph{IEEE Trans. Info. Theory}, vol.~19, no.~4, pp. 471--480,
  1973.

\bibitem{PRADHAN-DSC}
S.~Pradhan and K.~Ramchandran, ``Distributed source coding using syndromes
  {(DISCUS)}: design and construction,'' \emph{IEEE Trans. Info. Theory},
  vol.~49, no.~3, pp. 626 -- 643, Mar 2003.

\bibitem{Mina_TCOM}
M.~Sartipi and F.~Fekri, ``{Distributed source coding using short to moderate
  length rate-compatible LDPC codes: the entire Slepian-Wolf rate region},''
  \emph{IEEE Trans. Commun.}, vol.~56, no.~3, pp. 400--411, 2008.

\bibitem{Distributed_Video_Coding}
B.~Girod, A.~Aaron, S.~Rane, and D.~Rebollo-Monedero, ``Distributed video
  coding,'' \emph{Proceedings of the IEEE}, vol.~93, no.~1, pp. 71--83, 2005.

\bibitem{Universal_Finite_memory_source}
M.~Weinberger, J.~Rissanen, and M.~Feder, ``A universal finite memory source,''
  \emph{IEEE Trans. Info. Theory}, vol.~41, no.~3, pp. 643 --652, 1995.

\bibitem{Rissanen_1984}
J.~Rissanen, ``Universal coding, information, prediction, and estimation,''
  \emph{IEEE Trans. Info. Theory}, vol.~30, no.~4, pp. 629 -- 636, Jul 1984.

\bibitem{ISIT11}
A.~Beirami and F.~Fekri, ``Results on the redundancy of universal compression
  for finite-length sequences,'' in \emph{2011 IEEE International Symp. on
  Info. Theory (ISIT '2011)}, July 2011, pp. 1604--1608.

\bibitem{Merhav_Feder_IT}
N.~Merhav and M.~Feder, ``A strong version of the redundancy-capacity theorem
  of universal coding,'' \emph{IEEE Trans. Info. Theory}, vol.~41, no.~3, pp.
  714 --722, May 1995.

\bibitem{DCC12}
A.~Beirami and F.~Fekri, ``Memory-assisted universal source coding,'' in
  \emph{2012 Data Compression Conference (DCC '2012)}, April 2012, p. 392.

\bibitem{Korodi2005}
G.~Korodi, J.~Rissanen, and I.~Tabus, ``Lossless data compression using optimal
  tree machines,'' in \emph{2005 Data Compression Conference (DCC '2005)},
  March 2005, pp. 348 -- 357.

\bibitem{INFOCOM12}
M.~Sardari, A.~Beirami, and F.~Fekri, ``Memory-assisted universal compression
  of network flows,'' in \emph{IEEE INFOCOM 2012}, March 2012, pp. 91--99.

\bibitem{ISIT12_gain}
A.~Beirami, M.~Sardari, and F.~Fekri, ``Results on the fundamental gain of
  memory-assisted universal source coding,'' in \emph{2012 IEEE International
  Symposium on Information Theory (ISIT '2012)}, July 2012.

\bibitem{ITW11}
M.~Sardari, A.~Beirami, and F.~Fekri, ``On the network-wide gain of
  memory-assisted source coding,'' in \emph{2011 IEEE Information Theory
  Workshop (ITW' 2011)}, October 2011, pp. 476--480.

\bibitem{Shannon_paper}
C.~E. Shannon, ``{A Mathematical Theory of Communication},'' \emph{The Bell
  System Technical Journal}, vol.~27, pp. 379--423, 623--656, Jul, Oct 1948.

\bibitem{Clarke_Barron}
B.~Clarke and A.~Barron, ``Information-theoretic asymptotics of {B}ayes
  methods,'' \emph{IEEE Trans. Info. Theory}, vol.~36, no.~3, pp. 453 --471,
  1990.

\bibitem{atteson_markov}
K.~Atteson, ``The asymptotic redundancy of {B}ayes rules for {M}arkov chains,''
  \emph{IEEE Trans. Info. Theory}, vol.~45, no.~6, pp. 2104 --2109, 1999.

\bibitem{Precise_Minimax_Redundancy}
M.~Drmota and W.~Szpankowski, ``Precise minimax redundancy and regret,''
  \emph{IEEE Trans. Info. Theory}, vol.~50, no.~11, pp. 2686--2707, 2004.

\bibitem{huffman_redundancy}
W.~Szpankowski, ``{Asymptotic average redundancy of Huffman (and other) block
  codes },'' \emph{IEEE Trans. Info. Theory}, vol.~46, no.~7, pp. 2434--2443,
  2000.

\bibitem{KT_estimator}
R.~E. Krichevsky and V.~K. Trofimov, ``The performance of universal encoding,''
  \emph{IEEE Trans. Info. Theory}, vol.~27, no.~2, pp. 199--207, 1981.

\bibitem{Alon_source_coding}
N.~Alon and A.~Orlitsky, ``Source coding and graph entropies,'' \emph{IEEE
  Trans. Info. Theory}, vol.~42, no.~5, pp. 1329 --1339, September 1996.

\end{thebibliography}

\end{document}